\documentclass[runningheads]{llncs}
\usepackage[T1]{fontenc}
\usepackage{graphicx}
\usepackage{microtype}
\usepackage{makecell}
\usepackage{amssymb}
\usepackage{bbold}
\usepackage{tikz}
\usepackage{amsmath,amssymb,amsfonts}
\usepackage{enumitem}
\usepackage[none]{hyphenat}
\usepackage{pgfplots}
\pgfplotsset{compat=1.18}

\usepackage{listings}

\lstdefinelanguage{json}{
  basicstyle=\ttfamily\scriptsize,
  showstringspaces=false,
  breaklines=true
}

\usetikzlibrary{automata, positioning, arrows}
\tikzset{
	node distance=3cm,
	initial text=$ $
}

\newcommand{\bb}{\mathbb{b}}
\newcommand{\DD}{\mathbb{D}}
\newcommand{\EE}{\mathbb{E}}
\newcommand{\F}{\mathcal{F}}
\newcommand{\FF}{\mathbb{F}}
\renewcommand{\H}{\mathcal{H}}
\newcommand{\RR}{\mathbb{R}}
\renewcommand{\SS}{\mathbb{S}}
\newcommand{\W}{\mathcal{W}}

\newcommand{\best}{\mathrm{best}}
\newcommand{\dom}{\mathrm{dom}}

\newcommand{\lin}{\mathrm{lin}}
\newcommand{\Reg}{\mathrm{Reg}}
\newcommand{\Run}{\mathrm{Run}}
\newcommand{\worst}{\mathrm{worst}}
\newcommand{\wt}{\mathrm{wt}}

\newcommand{\zero}{\mathbb{0}}
\newcommand{\one}{\mathbb{1}}
\newcommand{\eps}{\varepsilon}

\newcommand{\lsem}{[\![}
\newcommand{\rsem}{]\!]}
\newcommand{\sem}[1]{{\lsem #1 \rsem}}
\newcommand{\llangle}{\langle \! \langle}
\newcommand{\rrangle}{\rangle \! \rangle}
\newcommand{\bind}[1]{{\llangle #1 \rrangle}}

\begin{document}
\title{Scenario Constraints with Memory: A Finite-State Approach to Quantitative Financial Analysis}
\titlerunning{Scenario Constraints with Memory}
%
\author{Vitaly N\"urnberg\orcidID{0009-0009-1785-2975}}
\authorrunning{V. N\"urnberg}
%
\institute{Independent Researcher \\
\email{vitaly.nuernberg@gmail.com}}
\maketitle              
\begin{abstract}
Quantifying worst-case and best-case performance under complex market scenarios is a persistent challenge in financial risk management and the verification of path-dependent financial instruments, such as exotic options and structured products. Simulation-based methods are well suited for probabilistic estimation, but they do not directly provide exhaustive guarantees over all admissible scenarios or explicit witnesses for extremal outcomes.

To address this, we introduce a quantitative automata-based framework 
for the exact extremal analysis of financial systems under declarative scenario constraints. At the core of our approach are event history automata (EHAs), a new formal model
that integrates regular-expression event patterns with admissible numerical
intervals to represent constrained event histories with memory.
Quantitative payoffs are represented by weighted finance finite automata (WFFAs),
which allow transition weights to depend on observed market values. By computing the synchronized product of EHAs and WFFAs, our framework enables the exact calculation of upper and lower payoff bounds. Furthermore, the method automatically extracts interpretable witness event histories that realize these extremal outcomes.

We demonstrate the practical viability of the approach through a case study of an
autocallable structured product with path-dependent mechanisms. The case study
analyzes how different scenario constraints affect coupon accumulation, early
redemption, and protection-loss outcomes. Scalability experiments indicate that the framework's execution remains computationally feasible for practical contract horizons and nontrivial constraint configurations. Overall, this approach provides a mathematically rigorous complement to standard financial simulation methods.

\keywords{Formal Methods \and Quantitative Finance \and Weighted Automata \and
Formal Modelling \and Scenario Analysis \and Stress Testing \and Structured
Products \and Case Study}
\end{abstract}
\section{Introduction}

Path dependence is a central feature of many financial instruments and
risk-management problems. In such settings, value and risk may depend not only on
terminal market states but also on the temporal structure of observed events and
the evolution of market quantities over time. 
Examples include path-dependent derivatives such as barrier and Asian
options~\cite{Hul18}, as well as autocallable structured products with early
redemption features~\cite{Gui15}.

In practice, such systems are often analyzed using simulation-based approaches
such as Monte Carlo methods~\cite{Gla04}. These techniques are effective for estimating
expected values and probabilistic risk measures under stochastic assumptions.
However, they explore sampled subsets of the admissible scenario space and therefore
provide limited guarantees about extremal outcomes. Rare scenarios with significant payoff impact may remain undiscovered by sampling.

At the same time, analysts and risk managers often formulate market assumptions in
qualitative and sequential terms: prolonged stress, delayed recovery, repeated
adverse event patterns such as consecutive downward movements, or restrictions on
certain events after specific observation steps; see, e.g.,~\cite{IAA13}.
Such assumptions are naturally expressed as constraints on event histories, but
they are difficult to combine with quantitative payoff models in a way that
supports exact best-case and worst-case analysis.

Formal methods provide a natural foundation for this perspective. Automata and
regular expressions~\cite{HU79,Kle56} are mathematically precise models for
specifying and analyzing discrete event structures. They also form a standard
basis for verification techniques such as model checking~\cite{BK08}. Weighted automata~\cite{DKV09} extend finite automata with quantitative semantics
over semirings. Weighted finance finite automata (WFFAs), introduced recently in~\cite{DN26},
extend this quantitative automata perspective to financial payoff modeling by
allowing payoff contributions to depend on observed market values.

The main goal of this paper is to complement simulation-based financial analysis
with a finite-state framework for rigorous extremal reasoning. Instead of assigning
probabilities to market paths, we describe which paths are admissible under a given
set of scenario assumptions. This non-probabilistic viewpoint is intended for settings such as stress testing, scenario design, and worst-case analysis, where explicit constraints and extremal outcomes are of primary interest; see, e.g.,~\cite{BCBS18}. It also provides a transparent mechanism for incorporating investor views or
expert judgments by expressing qualitative market beliefs as constraints on
admissible scenarios, in the spirit of view-based approaches such as the
Black--Litterman model~\cite{BL92}.

\paragraph{Contributions.}
The main contributions of this paper are as follows:
\begin{itemize}

\item We introduce pattern-based scenario constraints as a declarative specification
language for admissible financial scenarios. These constraints combine two components: event patterns specified by regular expressions and admissible numerical intervals. They allow qualitative market
beliefs, stress assumptions, and temporal restrictions to be expressed directly at
the level of event histories.

\item We define event history automata (EHAs) as finite-state mechanisms for
compiling and resolving pattern-based scenario constraints. EHAs are closely related to Moore machines~\cite{Moo56}: after reading an event history prefix, their state
summarizes the relevant prefix information, and their output determines the
admissible data interval associated with the current history prefix.

\item We develop synchronized product constructions that combine EHAs with the
WFFA-based payoff-modeling formalism of~\cite{DN26}. The resulting product automaton assigns to each admissible finance word the same payoff as the original WFFA, while words violating the EHA scenario constraints are excluded from the domain of evaluation.

\item We derive effective procedures for exact best-case and worst-case payoff
analysis under scenario constraints, adapting graph-based extremal-value techniques
for weighted automata~\cite{ABK22} to WFFAs with data-dependent transition
expressions. These procedures compute exact extremal payoff bounds and automatically extract witness event histories that realize them.

\item We implement the proposed constructions in a Java research prototype
supporting an end-to-end workflow from declarative scenario specification to
scenario-constrained extremal payoff analysis. The prototype translates input
models into automata, constructs the required synchronized products, and computes
extremal payoff values together with witness histories.

\item We evaluate the framework on a stylized autocallable structured-product case
study with typical payoff mechanisms of such products, including coupon
accumulation, early redemption, and downside protection features; see,
e.g.,~\cite{Gui15}. The experiments demonstrate the expressiveness of our declarative scenario specification, the interpretability of extracted witness histories, and the computational scalability of the framework with respect to contract horizon and the number of scenario constraints.

\end{itemize}

\section{Scenario Constraints and Event History Automata}

\subsection{Finance Words and Scenario Languages}

We focus on discrete financial scenarios, observing a market over a finite number
of time steps. At each step, we simultaneously capture a discrete market event and
its corresponding quantitative value. Events represent qualitative triggers, such
as market shocks or corporate earnings announcements, while the associated
numerical observations record market quantities such as asset prices or index
returns at that moment.

Let $\Sigma$ be a finite alphabet of discrete market events, and let $\DD = \RR_{\ge 0}$ be the quantitative data domain.\footnote{More general choices, such as \(\DD=\RR\) or \(\DD=\RR^n\), could model negative returns or several market quantities. For simplicity and clarity of presentation, we restrict attention to the one-dimensional non-negative setting
\(\DD=\RR_{\ge 0}\).}
A \emph{finance word} over $\Sigma$ and $\DD$ is a finite sequence $w = (a_1, d_1) \dots (a_n, d_n)$, where $n \ge 0$, $a_i \in \Sigma$, and $d_i \in \DD$ for all $1 \le i \le n$. The set of all such sequences is denoted by $(\Sigma \times \DD)^*$. These words serve as formal representations of the discrete financial scenarios introduced above.
For instance, the sequence
\(w=(\texttt{earnings\_announcement},100.0)(\texttt{market\_shock},85.5)\)
represents a scenario in which an earnings announcement is observed at a market
value of \(100.0\), followed by a market shock with the market value reduced to
\(85.5\).

In financial analysis, assumptions about future market behavior usually describe
a set of possible scenarios rather than a single future outcome. We represent such
sets as scenario languages. Formally, a \emph{scenario language} \(\mathcal L\)
over \(\Sigma\) and \(\DD\) is a set of finance words, i.e.,
\(\mathcal L \subseteq (\Sigma \times \DD)^*\).

\begin{example}
\label{EX:history_dependent_constraints}
Consider the share price of a publicly traded company observed over a finite
sequence of market events.
We use the alphabet \(\Sigma\) with event labels \texttt{normal},
\texttt{earnings\_miss}, and \texttt{macro\_shock}, denoting an ordinary trading
day, weaker-than-expected company earnings, and an adverse macroeconomic event,
respectively.
By default, admissible prices are required to lie in the interval \([98,102]\).
This default is overridden by the following history-dependent cases, listed in
order of priority:
\begin{itemize}
\item after an \(\texttt{earnings\_miss}\) followed later by a
\(\texttt{macro\_shock}\), the admissible interval becomes \([40,80]\) for the
remainder of the scenario;
\item after a \(\texttt{macro\_shock}\) followed later by an
\(\texttt{earnings\_miss}\), the admissible interval becomes \([90,104]\).
\end{itemize}
The first case has higher priority. Hence, if both ordered patterns occur, as in
\(
\texttt{earnings\_miss}\;\dots\;\texttt{macro\_shock}\;\dots\;
\texttt{macro\_shock}\;\dots\;\texttt{earnings\_miss},
\)
the first case determines the admissible interval, namely \([40,80]\).
These rules define a scenario language: a finance word is admissible if each
observed share price lies in the interval prescribed by the event history prefix.
For example,
\(
(\texttt{normal},100)
(\texttt{earnings\_miss},99)
(\texttt{macro\_shock},70)
(\texttt{normal},d)
\)
is admissible exactly when \(d \in [40, 80]\).
\end{example}

\subsection{Pattern-Based Scenario Constraints}

Example~\ref{EX:history_dependent_constraints} suggests a convenient way to specify scenario assumptions. Relevant event histories are described by regular expressions. Numerical ranges are then assigned to observations whenever such patterns apply. 

\paragraph{Regular expressions.}

Recall that \(\Sigma\) denotes the finite alphabet of discrete market events.
Let \(\Sigma^*\) denote the set of all words \(w=a_1\dots a_n\), where
\(n\ge 0\) and \(a_i\in\Sigma\) for all \(1\le i\le n\). The unique word of
length zero is denoted by \(\eps\). We identify each symbol \(a\in\Sigma\) with
the corresponding one-letter word in \(\Sigma^*\). Any set
\(L\subseteq\Sigma^*\) is called a language over \(\Sigma\).
For languages \(L_1,L_2\subseteq\Sigma^*\), their concatenation is
\(L_1\cdot L_2=\{uv \mid u\in L_1,\ v\in L_2\}\). The Kleene star of
\(L\subseteq\Sigma^*\) is \(L^*=\bigcup_{k \in \mathbb N}L^k\), where
\(L^0=\{\eps\}\) and \(L^{k+1}=L^k\cdot L\) for all $k \in \mathbb N$.

We model event patterns by regular expressions~\cite{Kle56} over \(\Sigma\). Let
\(\Reg(\Sigma)\) be the set of \emph{regular expressions} over \(\Sigma\) generated by
the grammar
\[
E ::= \eps \mid A \mid E + E \mid E \cdot E \mid E^*,
\]
where \(A\subseteq\Sigma\). In particular, \(A=\emptyset\) denotes the empty
language and \(A=\Sigma\) matches any single event.

The \emph{semantics} of a regular expression \(E\in\Reg(\Sigma)\) is the language \(\sem{E}\subseteq\Sigma^*\) defined inductively by \(\sem{\eps}=\{\eps\}\), \(\sem{A}=A\) for \(A\subseteq\Sigma\),
\(\sem{E_1+E_2}=\sem{E_1}\cup\sem{E_2}\),
\(\sem{E_1\cdot E_2}=\sem{E_1}\cdot\sem{E_2}\), and
\(\sem{E^*}=\sem{E}^*\).

\begin{example}
Let \(\Sigma\) be the alphabet from Example~\ref{EX:history_dependent_constraints}. Consider the regular expression
\(
E = \Sigma^* \cdot \texttt{earnings\_miss}\cdot \Sigma^* \cdot
\texttt{macro\_shock} \cdot \Sigma^* .
\)
Its semantics is
$
\sem{E}
=
\Sigma^* \cdot \{\texttt{earnings\_miss}\}\cdot \Sigma^*
\cdot \{\texttt{macro\_shock}\}\cdot \Sigma^* .
$
Thus, \(\sem{E}\) contains exactly those event histories in which an
\(\texttt{earnings\_miss}\) occurs before a later \(\texttt{macro\_shock}\), not necessarily in consecutive steps.
For example,
$
w_1 = \texttt{normal}\cdot\texttt{earnings\_miss}\cdot
\texttt{normal}\cdot\texttt{macro\_shock} \in \Sigma^*
$
satisfies the pattern, i.e., \(w_1 \in \sem{E}\), whereas
$
w_2 = \texttt{normal}\cdot\texttt{macro\_shock}\cdot
\texttt{earnings\_miss} \in \Sigma^*
$
does not, i.e., \(w_2 \notin \sem{E}\).
\end{example}

\paragraph{Numerical intervals.}
When a pattern applies, it restricts the numerical values that are considered possible under the current scenario assumptions. In this paper, we use \(\DD=\RR_{\ge 0}\) and restrict attention to the following interval types: the empty set \(\emptyset\), the full quantitative domain \(\DD\), bounded intervals \([a,b]\) with rational endpoints \(0\le a\le b\), and unbounded intervals \([a,\infty)\) with rational endpoint \(a\ge 0\).
We write \(\mathcal I(\DD)\) for the set of all such intervals.

The empty interval \(\emptyset\) means that any finance word matching the associated pattern is excluded from the scenario language. The full domain \(\DD\) means that there is no numerical restriction. A bounded interval \([a,b]\) imposes lower and upper bounds, while \([a,\infty)\) imposes only a lower bound.

\paragraph{Pattern-based scenario constraints.}
We now combine event patterns and numerical intervals into a rule-based specification language. The goal is to describe how admissible numerical observations depend on the previously observed event history.

A \emph{pattern-based scenario constraint} over \(\Sigma\) and \(\DD\) is a pair
\(
(E,J),
\)
where \(E \in \Reg(\Sigma)\) is an event pattern and \(J \in \mathcal I(\DD)\) is a numerical interval. Its intended meaning is that, whenever the event history matches \(E\), the current numerical observation must lie in \(J\). We write
$
\mathcal C(\Sigma,\DD)
$
for the set of all pattern-based scenario constraints over \(\Sigma\) and \(\DD\).

Several constraints may match the same event history. For example, one pattern may describe a broad market regime, while another describes a more specific stress case. We therefore need a mechanism that selects the interval to be applied.
For a finite set \(X\), we write \(2^X\) for the set of all subsets of \(X\).

\begin{definition}
A \emph{pattern-based scenario constraint system} over \(\Sigma\) and \(\DD\) is a pair
$
\mathcal S=(\mathbb{C},\operatorname{Res}),
$
where \(\mathbb{C}=(c_1,\dots,c_m)\) is a finite sequence of constraints from \(\mathcal C(\Sigma,\DD)\), and
\(
\operatorname{Res}:2^{\{1,\dots,m\}}\to\mathcal I(\DD)
\)
is an \emph{interval resolver}.
\end{definition}

Given an event history \(u\in\Sigma^*\), the set of \emph{active constraint
indices} is
\(
{\operatorname{Act}_{\mathcal S}(u)
=
\{i\in\{1,\dots,m\} \mid u\in \sem{E_i}\},}
\)
where \(c_i=(E_i,J_i)\). Thus, a constraint is active after \(u\) precisely when
its event pattern matches the history \(u\).

The interval resolver then determines which interval is applied for the current
history. The constraint system induces a mapping
\(
\sem{\mathcal S}: \Sigma^*\to\mathcal I(\DD)
\)
defined by
\(
\sem{\mathcal S}(u)
=
\operatorname{Res}\bigl(\operatorname{Act}_{\mathcal S}(u)\bigr).
\)
Hence, every event history \(u\) is assigned the interval of numerical values
admissible after observing \(u\).

For a finance word
\(
w=(a_1,d_1)\dots(a_n,d_n)
\)
and \(i \in \{1,\dots,n\}\), we call
\(
{u_i=a_1\dots a_i} \in \Sigma^*
\)
the \emph{prefix history} at step \(i\). The scenario language defined by
\(\mathcal S\) consists of all finance words whose observed values satisfy the
intervals assigned to their prefix histories:
\[
\mathcal L(\mathcal S)=
\left\{
(a_1,d_1)\dots(a_n,d_n)\in(\Sigma\times\DD)^*
\;\middle|\;
d_i\in \sem{\mathcal S}(a_1\dots a_i)
\text{ for all } 1 \le i \le n
\right\}.
\]

\begin{example}[Priority Resolver]
Scenario specifications often contain both baseline assumptions and more specific
stress conditions. A priority resolver selects the first active constraint in a
given priority order, so that specific stress conditions can override generic
baseline assumptions.

Assume that constraints are ordered by decreasing priority, so that \(c_1\) has
the highest priority. Let \(J_0\in\mathcal I(\DD)\) be a default interval used
when no constraint is active. The priority resolver
\(
\operatorname{Res}_{\mathrm{prio}}^{J_0}:2^{\{1,\dots,m\}}\to\mathcal I(\DD)
\)
is defined as follows. For every non-empty set \(X\subseteq\{1,\dots,m\}\), let
\(
\operatorname{Res}^{J_0}_{\mathrm{prio}}(X)=J_k,
\)
where \(k=\min X\) and \(c_k=(E_k,J_k)\). For the empty set, let
\(
\operatorname{Res}^{J_0}_{\mathrm{prio}}(\emptyset)=J_0.
\)

Thus, among all active constraints, the highest-priority one determines the
interval to be applied. This is useful when stress constraints are intended to
override normal-market assumptions.
\end{example}

\begin{example}[Intersection Resolver]
A conservative alternative is to require all active constraints to be satisfied
simultaneously. This gives the \emph{intersection resolver}
\(\operatorname{Res}_{\cap}:2^{\{1,\dots,m\}}\to\mathcal I(\DD)\), defined by
\(\operatorname{Res}_{\cap}(X)=\bigcap_{i\in X} J_i\), where
\({c_i=(E_i,J_i)}\). For the empty set of active constraints, we set
\(\operatorname{Res}_{\cap}(\emptyset)=\DD\).

Thus, only numerical values contained in every active interval remain admissible.
If the active constraints are inconsistent, the resulting interval is
\(\emptyset\).
\end{example}

\begin{example}
\label{EX:formalization_history_dependent_component}
We now formalize the scenario language described informally in Example~\ref{EX:history_dependent_constraints}.
Let \(\Sigma\) be the alphabet from that example. Consider the following constraints $c_1, c_2 \in \mathcal C(\Sigma, \DD)$, ordered by decreasing priority:
\[
\begin{aligned}
c_1 &= \left(
\Sigma^*\cdot\texttt{earnings\_miss}\cdot\Sigma^*
\cdot\texttt{macro\_shock}\cdot\Sigma^*,
[40,80]
\right), \\
c_2 &= \left(
\Sigma^*\cdot\texttt{macro\_shock}\cdot\Sigma^*
\cdot\texttt{earnings\_miss}\cdot\Sigma^*,
[90,104]
\right).
\end{aligned}
\]
Let \(J_0=[98,102]\) be the default interval and let \(\operatorname{Res}^{J_0}_{\mathrm{prio}}\) be the priority resolver with default interval \(J_0\). This defines the pattern-based scenario constraint system
\(
\mathcal S=((c_1,c_2),\operatorname{Res}^{J_0}_{\mathrm{prio}}).
\)
Consider the event history
\(
u=
\texttt{macro\_shock}\cdot
\texttt{earnings\_miss}\cdot
\texttt{normal}\cdot
\texttt{earnings\_miss}\cdot
\texttt{macro\_shock}\cdot
\texttt{normal}.
\)
For \(i \in \{1,\dots,6\}\), let $a_i \in \Sigma$ be the $i$-th letter of $u$, and let $u_i \in \Sigma^*$ be the prefix history of \(u\) at step \(i\). The assigned intervals are
\(
\sem{\mathcal S}(u_1)=[98,102]
\),
\(
\sem{\mathcal S}(u_i)=[90,104]
\)
for \(i=2,3,4\),
\(
\sem{\mathcal S}(u_i)=[40,80]
\) for \(i=5,6\).
At \(i=2\), the second constraint becomes active. At \(i=5\), both constraints are active, and the first one is selected because it has higher priority. 
\end{example}

\subsection{Event History Automata (EHA)}

Let \(\Sigma\) be a finite event alphabet and let \(\DD = \RR_{\ge 0}\), the data domain. We now
introduce event history automata, a finite-state model that assigns admissible
data intervals to event histories. EHAs can be specified directly, for instance
to represent sequential interval forecasts or tree-like scenario structures, and
they also serve as the compilation target for pattern-based scenario constraints.

\begin{definition}
An \emph{event history automaton} (EHA) over \(\Sigma\) and \(\DD\) is a tuple
\(
\H = (Q,q_0,\delta,\rho),
\)
where \(Q\) is a finite set of states, \(q_0\in Q\) is the initial state, \(\delta:Q\times\Sigma\to Q\) is a deterministic transition function, and \(\rho:Q\to\mathcal I(\DD)\) assigns an interval to each state.
\end{definition}

An EHA can be viewed as a deterministic Moore machine~\cite{Moo56} whose outputs are intervals over the data domain. Each state summarizes the relevant information contained in the event history processed so far, and the output function specifies the admissible numerical values after that history.

For an event history \(u=a_1\dots a_n\in\Sigma^*\), the unique run of \(\H\) on \(u\) is the sequence
\(
q_0,q_1,\ldots,q_n
\)
such that
\(
q_i=\delta(q_{i-1},a_i)
\)
for all 
\(1\le i\le n\).
We write
\(
\operatorname{last}_{\H}(u)=q_n
\)
for the last state of this run. The \emph{behavior} of \(\H\) is the mapping
\(
\sem{\H}:\Sigma^*\to\mathcal I(\DD)
\)
defined for all $u \in \Sigma^*$ by
\(
\sem{\H}(u)
=
\rho\bigl(\operatorname{last}_{\H}(u)\bigr).
\)
The scenario language induced by \(\H\) is
\[
\mathcal L(\H)=
\left\{
(a_1,d_1)\dots(a_n,d_n)\in(\Sigma\times\DD)^*
\;\middle|\;
d_i\in \sem{\H}(a_1\dots a_i)
\text{ for all }1\le i\le n
\right\}.
\]

\begin{example}
Consider the pattern-based scenario constraint system introduced in
Example~\ref{EX:formalization_history_dependent_component}. Here we show how the
first constraint \(c_1\) alone can be modeled by an event history automaton. We use the priority resolver
\(\operatorname{Res}^{J_0}_{\mathrm{prio}}\) with default interval
\(J_0=[98,102]\).
Figure~\ref{fig:eha_earnings_miss_macro_shock} shows the corresponding EHA. It tracks whether an \(\texttt{earnings\_miss}\) has occurred and whether it has later been followed by a \(\texttt{macro\_shock}\).
\begin{figure}[ht]
\centering
\begin{tikzpicture}[
    >=stealth,
    every node/.style={font=\scriptsize},
    state/.style={circle, draw, minimum size=16pt, inner sep=1pt},
]
\node (q0i) at (0,0) {};
\node[state] (q0) at (1.25,0) {$q_0$};
\node[state] (q1) at (4.25,0) {$q_1$};
\node[state] (q2) at (7.25,0) {$q_2$};

\node (q0f) [below=0.0cm of q0] {$[98,102]$};
\node (q1f) [below=0.0cm of q1] {$[98,102]$};
\node (q2f) [below=0.0cm of q2] {$[40,80]$};

\draw[->]
    (q0i) edge[above]  (q0)
    (q0) edge[loop above] node{\texttt{normal}, \texttt{macro\_shock}} (q0)
    (q0) edge[above] node{\texttt{earnings\_miss}} (q1)
    (q1) edge[loop above] node{\texttt{normal}, \texttt{earnings\_miss}} (q1)
    (q1) edge[above] node{\texttt{macro\_shock}} (q2)
    (q2) edge[loop above] node{$\Sigma$} (q2);
\end{tikzpicture}
\caption{Event history automaton for the first constraint \(c_1\) from
Example~\ref{EX:formalization_history_dependent_component}.}
\label{fig:eha_earnings_miss_macro_shock}
\end{figure}
The label \(\Sigma\) on the self-loop of \(q_2\) means that the automaton remains in
\(q_2\) after reading any event from the alphabet. The intervals written below the
states are the values of the output function \(\rho\). Thus, prefixes ending in
\(q_0\) or \(q_1\) are assigned the default interval \([98,102]\), while prefixes
ending in \(q_2\) are assigned the stressed interval \([40,80]\).
\end{example}

The previous example illustrates how a single pattern constraint can be represented by an EHA. The same idea extends to arbitrary finite constraint systems: each pattern is compiled into a deterministic automaton, and the resulting automata are combined by a synchronous product.

\begin{theorem}[Equivalence of pattern constraints and EHAs]
\label{THM:equiv_pattern_constraints_eha}
Let \(\Sigma\) be a finite event alphabet and let \(\DD\) be the quantitative domain.

\begin{enumerate}[label=(\alph*)]
\item For every pattern-based scenario constraint system \(\mathcal S\) over \(\Sigma\) and \(\DD\), there exists an EHA \(\H\) over \(\Sigma\) and \(\DD\) such that
\(
\sem{\H}=\sem{\mathcal S}.
\)

\item For every EHA \(\H\) over \(\Sigma\) and \(\DD\), there exists a pattern-based scenario constraint system \(\mathcal S\) over \(\Sigma\) and \(\DD\) such that
\(
\sem{\mathcal S}=\sem{\H}.
\)
\end{enumerate}
\end{theorem}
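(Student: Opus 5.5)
For part (a), I will compile each pattern into a deterministic automaton and combine these automata by a synchronous product. Let \(\mathcal S=(\mathbb C,\operatorname{Res})\) with \(\mathbb C=(c_1,\dots,c_m)\) and \(c_i=(E_i,J_i)\). By Kleene's theorem~\cite{Kle56,HU79}, each \(\sem{E_i}\) is a regular language. Hence there is a complete deterministic finite automaton \(\mathcal A_i=(P_i,p_{0,i},\delta_i,F_i)\) with \(L(\mathcal A_i)=\sem{E_i}\), obtained for instance via the Thompson construction followed by the subset construction. I define the EHA \(\H=(Q,q_0,\delta,\rho)\) by
\[
Q=P_1\times\dots\times P_m,\qquad q_0=(p_{0,1},\dots,p_{0,m}),
\]
with componentwise transition \(\delta((p_1,\dots,p_m),a)=(\delta_1(p_1,a),\dots,\delta_m(p_m,a))\) and output
\[
\rho(p_1,\dots,p_m)=\operatorname{Res}(\{i\mid p_i\in F_i\}).
\]
A straightforward induction on \(|u|\) shows that \(\operatorname{last}_\H(u)\) is the tuple of the last states of the individual DFAs. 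Its \(i\)-th component is accepting iff \(u\in\sem{E_i}\), so \(\rho(\operatorname{last}_\H(u))=\operatorname{Res}(\operatorname{Act}_{\mathcal S}(u))=\sem{\mathcal S}(u)\). The case \(m=0\) is covered by taking a one-state automaton with output \(\operatorname{Res}(\emptyset)\).

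For part (b), I will go in the converse direction and assign one pattern to each state of the EHA. Let \(\H=(Q,q_0,\delta,\rho)\) and enumerate \(Q=\{q^{(1)},\dots,q^{(m)}\}\). For each \(k\), the language \(L_k=\{u\mid\operatorname{last}_\H(u)=q^{(k)}\}\) is accepted by the DFA \((Q,q_0,\delta,\{q^{(k)}\})\). By the other direction of Kleene's theorem, \(L_k=\sem{E_k}\) for some \(E_k\in\Reg(\Sigma)\), which we may take within the paper's grammar since the letters \(\{a\}\subseteq\Sigma\) and \(\emptyset\) are available. I set \(c_k=(E_k,\rho(q^{(k)}))\). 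Every \(u\) determines exactly one last state, so \(\operatorname{Act}_{\mathcal S}(u)=\{k\}\) where \(q^{(k)}=\operatorname{last}_\H(u)\). This means I may take \(\operatorname{Res}(\{k\})=\rho(q^{(k)})\) and assign \(\operatorname{Res}(X)\) arbitrarily, say \(\emptyset\), when \(|X|\neq1\). The priority resolver would also work, with an arbitrary default interval. Then \(\sem{\mathcal S}(u)=\rho(\operatorname{last}_\H(u))=\sem\H(u)\).

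No step is a genuine obstacle, since the proof is essentially bookkeeping around Kleene's theorem. The point needing most care is checking that the paper's restricted regex grammar, with atoms \(A\subseteq\Sigma\) and no separate \(\emptyset\) constructor beyond \(A=\emptyset\), still expresses every regular language, including the empty one. This is needed in (b) when some state is unreachable. I would note that the standard state-elimination construction only produces unions, concatenations and stars of single letters, \(\eps\) and \(\emptyset\), all of which are in \(\Reg(\Sigma)\). A second minor point is that the resolver in (a) may be arbitrary, so no property of \(\operatorname{Res}\) may be assumed. The product construction uses it only through the exact active set, so this causes no difficulty.
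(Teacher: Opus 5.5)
Your proposal is correct and follows essentially the same route as the paper. For (a), you take a synchronous product of complete DFAs for the patterns with output $\operatorname{Res}(\{i \mid p_i\in F_i\})$. For (b), you use one pattern per EHA state describing $\{u \mid \operatorname{last}_{\H}(u)=j\}$, with a resolver that is only ever applied to singletons (the paper uses the priority resolver). Your extra remarks on the case $m=0$ and on the restricted regex grammar expressing the empty language are harmless refinements that the paper leaves implicit.
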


For part~(a), each event pattern is translated into a deterministic finite automaton; their
synchronous product tracks the active constraints, and the EHA output is obtained
by applying the resolver. Part~(b) follows by using one pattern for each EHA
state.

\begin{remark}[Complexity]
If the system has \(m\) constraints and each pattern automaton has \(O(N)\)
states, the product construction may have \(O(N^m)\) states in the worst case.
After construction, the interval for a history \(u\in\Sigma^*\) is computed in
time \(O(|u|)\) by following the unique run.
\end{remark}

\section{Quantitative Payoff Evaluation via Weighted Automata}

In this section, we recall the quantitative automata model used for payoff
evaluation. Our presentation is based on weighted finite finance automata
(WFFAs)~\cite{DN26}, which provide a finite-state formalism for modeling
financial payoff structures over event-data sequences.

We only recall the parts of the framework that are needed in the present paper. 
For the general theory, further examples, and the connection with weighted finance regular expressions, we refer the reader to~\cite{DN26}.

\paragraph{Finance Semirings.}

We first recall the algebraic setting used for quantitative payoff evaluation. The role of the algebraic structure is to specify how local payoff contributions are accumulated along a run and how alternative runs are compared.

A \emph{semiring} (cf., e.g.,~\cite{DKV09}) is a structure
\(
\SS=(S,\oplus,\otimes,\zero,\one),
\)
where \((S,\oplus,\zero)\) is a commutative monoid, \((S,\otimes,\one)\) is a monoid, multiplication distributes over addition, and
\(
s\otimes\zero=\zero\otimes s=\zero
\)
for all \(s\in S\).
The main example used in this paper is the \emph{max-plus semiring}
\(
\SS_{\max,+}
=
(\RR\cup\{-\infty\},\max,+,-\infty,0).
\)
Here, \(+\) accumulates values along a path, while \(\max\) selects the best value among alternatives.

To incorporate quantitative market observations, we use an additional data-binding operation. Let \(\DD\) be the data domain. A \emph{data-binding function} over \(\SS\) and \(\DD\) is a mapping
\(
\bb:\DD\times S\to S.
\)
It describes how an externally observed numerical value, such as an asset price, is combined with a semiring value during the evaluation of a transition.

A \emph{finance semiring} is a triple
\(
\FF=(\SS,\DD,\bb).
\)
Thus, \(\SS\) specifies the algebra used for accumulation and optimization,
\(\DD\) specifies the domain of external quantitative observations, and \(\bb\)
specifies the operation by which observations are combined with semiring weights.

In the present paper, we restrict attention to the \emph{linear finance semiring}
\({
\FF_{\lin}
=
(\SS_{\max,+},\RR_{\ge 0},\bb_{\lin}),}
\)
where
\(
\bb_{\lin}(d,s)=d\cdot s
\)
for \(s\in\RR\), and \({\bb_{\lin}(d,-\infty)=-\infty}\) for all \(d \in \DD\). Thus, finite scalar
weights are transformed by multiplication with the observed data value. For instance, if \(d\) is an observed price and \(s\) is the number of units held, then \(\bb_{\lin}(d,s)\) is the market value of the position.

\paragraph{Local Payoff Expressions}

Weighted finance automata evaluate finance words step by step. When the automaton reads a pair \((a,d)\), the event \(a\) determines which transitions are available, while the quantitative value \(d\) is used to evaluate the payoff expression attached to the chosen transition. These expressions are local in the sense that they depend only on the current data value.

\begin{definition}
Let \(\FF=(\SS,\DD,\bb)\) be a finance semiring with
\(\SS=(S,\oplus,\otimes,\zero,\one)\). The set \(\EE_{\FF}\) of
\emph{local payoff expressions} over \(\FF\) is generated by the grammar
\[
e ::= s \mid \bind{s} \mid e\oplus e \mid e\otimes e \mid e=e \mid e\neq e
\qquad (s\in S).
\]
\end{definition}

These expressions correspond to the \(\FF\)-expressions introduced in~\cite{DN26}; we use the term local payoff expressions here to emphasize their role as transition-level payoff specifications.
Let \(e\in\EE_{\FF}\) be a local payoff expression. The \emph{semantics} of \(e\) is the mapping
\(
\sem{e}:\DD\to S
\)
defined inductively as follows. For every \(d\in\DD\),
{\footnotesize
\[
\begin{array}{rcl@{\qquad}rcl}
\sem{s}(d) &=& s
&
\sem{\bind{s}}(d) &=& \bb(d,s)
\\[2mm]
\sem{e_1\oplus e_2}(d) &=& \sem{e_1}(d)\oplus \sem{e_2}(d)
&
\sem{e_1\otimes e_2}(d) &=& \sem{e_1}(d)\otimes \sem{e_2}(d)
\\[2mm]
\sem{e_1=e_2}(d) &=&
\begin{cases}
\one, & \text{if } \sem{e_1}(d)=\sem{e_2}(d),\\
\zero, & \text{otherwise,}
\end{cases}
&
\sem{e_1\neq e_2}(d) &=&
\begin{cases}
\one, & \text{if } \sem{e_1}(d)\neq\sem{e_2}(d),\\
\zero, & \text{otherwise.}
\end{cases}
\end{array}
\]
}
Constants provide fixed semiring values, terms of the form \(\bind{s}\) combine
the current data value with \(s\) through \(\bb\), and comparison expressions act
as guards. For the linear finance semiring \(\FF_{\lin}\), equality and
inequality guards can encode the usual order comparisons \(<,\le,>,\ge\); see
Example~2.9 in~\cite{DN26}.

For example, over \(\FF_{\lin}\), the expression
\(
e=\bigl(\bind{1}\otimes(-100.0)\bigr)
\)
has the semantics
\(
\sem{e}(d)=d-100.0
\)
for all \(d\in\DD\). Hence, the expression
\(
e'=\bigl((\bind{1}\otimes(-100.0))\oplus 0.0 \bigr)
\)
has the semantics
\(
\sem{e'}(d)=\max(d-100.0,0.0).
\)
Here, \(0.0\) is a real-valued constant, not the semiring zero \(\zero=-\infty\).

\paragraph{Weighted Finite Finance Automata (WFFA)}

Let \(\FF=(\SS,\DD,\bb)\) be a finance semiring with \(\SS=(S,\oplus,\otimes,\zero,\one)\).
We now recall weighted finite finance automata from~\cite{DN26}. They extend the weighted automata of Sch\"utzenberger by allowing transition weights to be local payoff expressions from $\EE_{\FF}$ rather than fixed semiring constants from $S$. In this way, transition weights can incorporate the quantitative input value observed at the current step.

\begin{definition}
A \emph{weighted finite finance automaton} (WFFA) over an event alphabet \(\Sigma\) and a finance semiring \(\FF\) is a tuple
\(
\mathcal W=(Q,I,T,F,\wt_I,\wt_T,\wt_F),
\)
where \(Q\) is a finite set of states, \(I,F\subseteq Q\) are the initial and final states,
\(
T\subseteq Q\times\Sigma\times Q
\)
is the transition relation, \(\wt_I:I\to S\) and \(\wt_F:F\to S\) are initial and final weight functions, and
\(
\wt_T:T\to\EE_{\FF}
\)
assigns a local payoff expression to each transition.
\end{definition}
Let
\(
w=(a_1,d_1)\ldots(a_n,d_n)\in(\Sigma\times\DD)^*
\)
be a finance word. A \emph{run} of \(\mathcal W\) on \(w\) is a sequence
\(
\varrho=
(q_0 \xrightarrow{a_1} q_1 \xrightarrow{a_2} \ldots \xrightarrow{a_n} q_n)
\)
such that \(q_0\in I\), \(q_n\in F\), and \(t_i=(q_{i-1},a_i,q_i)\in T\) for all \(1\le i\le n\). The \emph{accumulated payoff} of the run \(\varrho\) on \(w\) is
\(
\wt_{\mathcal W}(\varrho,w)
=
\wt_I(q_0)\otimes
\sem{\wt_T(t_1)}(d_1)\otimes\ldots\otimes
\sem{\wt_T(t_n)}(d_n)\otimes
\wt_F(q_n).
\)
Thus, each transition contributes a local payoff value, and these contributions are accumulated along the run using the semiring multiplication \(\otimes\).

The \emph{behavior} of \(\mathcal W\) is the mapping
\(
\sem{\mathcal W}:(\Sigma\times\DD)^*\to S
\)
defined by
\(
{\sem{\mathcal W}(w)
=
\bigoplus \left(
\wt_{\mathcal W}(\varrho,w) \mid \varrho\in\Run_{\mathcal W}(w) \right),
}
\)
where \(\Run_{\mathcal W}(w)\) is the set of all runs of \(\mathcal W\) on \(w\).
Thus, local payoff expressions are evaluated on the data values of the input word. Their values are accumulated along each run using \(\otimes\), and the accumulated payoffs of all compatible runs are aggregated using \(\oplus\). In the max-plus setting, this means that local payoff values are summed along a run, and the behavior selects the maximal accumulated payoff over all compatible runs.

\paragraph{Scenario-Restricted Payoff Evaluation}
A WFFA assigns a payoff value to every finance word, whereas an EHA specifies which finance words are admissible under the chosen scenario assumptions. We now combine these two components.

A \emph{quantitative finance language}~\cite{DN26} over \(\Sigma\), \(\DD\), and \(\SS\) is a mapping
\(
{\mathcal F:(\Sigma\times\DD)^*\to S.}
\)
It assigns a semiring value to each finance word. In particular, the behavior \(\sem{\mathcal W}\) of a WFFA is a quantitative finance language. Let
\(
\mathcal L \subseteq(\Sigma\times\DD)^*
\)
be a scenario language. The \emph{restriction} of \(\mathcal F\) to \(\mathcal L\) is the quantitative finance language
\(
(\mathcal F\cap \mathcal L):(\Sigma\times\DD)^*\to S
\)
defined by setting \((\mathcal F\cap \mathcal L)(w)=\mathcal F(w)\) for \(w\in \mathcal L\), and
\((\mathcal F\cap \mathcal L)(w)=\zero\) otherwise.
Thus, finance words outside the scenario language are assigned the semiring zero and do not contribute to the payoff evaluation.

To enforce such restrictions within a WFFA, interval membership must be encoded by
local payoff expressions. Let \(J\in\mathcal I(\DD)\). A local payoff expression \(g_J\in\EE_{\FF}\) is an \emph{interval guard} for \(J\) if, for every \(d\in\DD\), we have
\(\sem{g_J}(d)=\one\) when \(d\in J\), and \(\sem{g_J}(d)=\zero\) otherwise.
We call \(\FF\) \emph{interval-complete} if every interval \(J\in\mathcal I(\DD)\) has an interval guard \(g_J\in\EE_{\FF}\). For instance, the linear finance semiring \(\FF_{\lin}\) is interval-complete (cf. Example~2.9 in~\cite{DN26}).

\begin{theorem}[Restriction by event history automata]
\label{THM:wffa_restriction_eha}
Let \(\FF\) be interval-complete. For every EHA \(\H\) over \(\Sigma\) and
\(\DD\), and every WFFA \(\mathcal W\) over \(\Sigma\) and \(\FF\), there exists
a WFFA \(\mathcal W'\) over \(\Sigma\) and \(\FF\) such that
\(
\sem{\mathcal W'}=\sem{\mathcal W}\cap \mathcal L(\H),
\)
where \(\mathcal L(\H)\) denotes the scenario language induced by \(\H\).
\end{theorem}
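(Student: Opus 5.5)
We construct \(\mathcal W'\) as a synchronized product of \(\mathcal W=(Q,I,T,F,\wt_I,\wt_T,\wt_F)\) and \(\H=(P,p_0,\delta,\rho)\). Let
\[
Q'=Q\times P,\quad I'=I\times\{p_0\},\quad F'=F\times P,
\]
with \(\wt_{I'}(q,p_0)=\wt_I(q)\) and \(\wt_{F'}(q,p)=\wt_F(q)\). The transition set \(T'\) consists of all triples \(((q,p),a,(q',\delta(p,a)))\) with \((q,a,q')\in T\). Such a transition receives the local payoff expression
\[
\wt_{T'}\bigl((q,p),a,(q',\delta(p,a))\bigr)=\wt_T(q,a,q')\otimes g_{\rho(\delta(p,a))},
\]
where \(g_J\) is an interval guard for \(J\), which exists because \(\FF\) is interval-complete. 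The guard is attached to the \emph{target} EHA state because \(\sem{\H}(a_1\dots a_i)=\rho(\operatorname{last}_\H(a_1\dots a_i))\) is the state reached after reading \(a_i\). This is exactly the state that governs \(d_i\) in the definition of \(\mathcal L(\H)\).

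The proof then proceeds in three steps.

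\textbf{Step 1: runs correspond bijectively.} Since \(\H\) is deterministic and complete, runs of \(\mathcal W'\) on \(w=(a_1,d_1)\dots(a_n,d_n)\) correspond bijectively to runs \(q_0\dots q_n\) of \(\mathcal W\) on \(w\). The second component is forced to be the unique run \(p_0,\dots,p_n\) of \(\H\) on \(a_1\dots a_n\), and since \(F'=F\times P\) every such paired sequence is accepting.

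\textbf{Step 2: weights of corresponding runs.} The weight of a product run is
\[
\wt_I(q_0)\otimes\bigotimes_{i}\Bigl(\sem{\wt_T(t_i)}(d_i)\otimes\sem{g_{\rho(p_i)}}(d_i)\Bigr)\otimes\wt_F(q_n).
\]
We need to reorder the guard factors out of this product. In \(\SS_{\max,+}\) this is immediate because \(\otimes\) is commutative. For a general semiring we do not need commutativity: each guard evaluates to \(\one\) or \(\zero\), and both of these commute with every element. Using associativity, the run weight therefore equals
\[
\wt_{\mathcal W}(\varrho,w)\otimes\bigotimes_i\sem{g_{\rho(p_i)}}(d_i).
\]
This last product is \(\one\) if \(d_i\in\sem{\H}(a_1\dots a_i)\) for all \(i\), i.e.\ if \(w\in\mathcal L(\H)\), and \(\zero\) otherwise, by annihilation of \(\zero\).

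\textbf{Step 3: behaviors.} Summing over runs gives \(\sem{\mathcal W'}(w)=\sem{\mathcal W}(w)\) when \(w\in\mathcal L(\H)\). Otherwise every run has weight \(\zero\), so \(\sem{\mathcal W'}(w)=\zero\), since \(\bigoplus\) of \(\zero\)'s (or the empty sum) is \(\zero\). The case \(n=0\) is consistent: the empty word lies in \(\mathcal L(\H)\), and the product's empty runs mirror those of \(\mathcal W\).

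The main obstacle is the off-by-one alignment between EHA states and data positions: the guard must use the interval of the state reached after \(a_i\), not before it. A second, milder point is the interaction of guards with a possibly noncommutative \(\otimes\). I handle it through the centrality of \(\zero\) and \(\one\) as in Step 2; for \(\FF_{\lin}\) the point is trivial.
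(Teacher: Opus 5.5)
Your proposal is correct and follows essentially the same route as the paper: the synchronized product of $\H$ and $\mathcal W$, with the interval guard of the \emph{target} EHA state multiplied onto each transition expression, then the bijection of runs and the case split on whether $w\in\mathcal L(\H)$. Your extra care about a possibly noncommutative $\otimes$ is sound but not needed: the paper never reorders factors, since each guard acts in place, a $\one$-guard being absorbed and a $\zero$-guard annihilating the whole product.
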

The construction is a synchronized product of \(\H\) and \(\mathcal W\). The
interval-completeness assumption ensures that the EHA interval output can be
encoded as a local payoff-expression guard on the corresponding WFFA transitions.

\paragraph{Extremal Payoff Analysis}
\label{SUBSEC:extremal_values}
We now define the extremal payoff quantities used below.  We
restrict attention to the linear finance semiring \(\FF_{\lin}\), where payoffs
are real values and infeasible words have value \(-\infty\).

Let \(\F:(\Sigma\times\DD)^*\to \RR\cup\{-\infty\}\) be a quantitative finance
language. Its feasible domain is
\(
\dom(\F)=\{w\in(\Sigma\times\DD)^*\mid \F(w)\neq -\infty\}.
\)
Given a scenario language \(\mathcal L\subseteq(\Sigma\times\DD)^*\), the \emph{best-case} and
\emph{worst-case payoffs} over \(\mathcal L\) are
\(
\best_L(\F)=\sup \left\{\F(w) \mid w\in \mathcal L\cap\dom(\F)\right\}
\)
and
\(
\worst_L(\F)=\inf \left\{\F(w) \mid w\in \mathcal L\cap\dom(\F)\right\}
\),
respectively. If \({\mathcal L\cap\dom(\F)=\emptyset}\), then the scenario is infeasible.

Let \(\W=(Q,I,T,F,\wt_I,\wt_T,\wt_F)\) be a WFFA over \(\Sigma\) and \(\FF\).
A transition \(t\in T\) is \emph{enabled} by \(d\in\DD\) if
\(
\sem{\wt_T(t)}(d)\neq\zero.
\)
The WFFA \(\W\) is called \emph{deterministic} if it has a unique initial state and,
for every \(q\in Q\), \(a\in\Sigma\), and \(d\in\DD\), at most one transition
\(t=(q,a,q')\in T\) is enabled by \(d\).

The following result adapts graph-based extremal-value techniques for weighted
automata~\cite{ABK22} to WFFAs with data-dependent transition expressions and
scenario constraints, building on the threshold-analysis perspective
of~\cite{DN26}.

\begin{theorem}[Polynomial-time extremal payoff analysis]
\label{THM:extremal_payoff_analysis}
Let \(\H\) be an EHA over \(\Sigma\) and \(\DD\), and let \(\W\) be a WFFA over
\(\Sigma\) and \(\FF_{\lin}\). Assume that all numerical constants occurring in
\(\W\) and \(\H\) are rational numbers encoded in binary. Then the best-case value
\(\best_{\mathcal L(\H)}(\sem{\W})\) can be computed in polynomial time in the
size of the product automaton and the bit-size of the numerical constants. If
\(\W\) is deterministic, then the worst-case value
\(\worst_{\mathcal L(\H)}(\sem{\W})\) can also be computed in polynomial time in
the size of the product automaton and the bit-size of the numerical constants.
\end{theorem}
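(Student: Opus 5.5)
We reduce to a path problem on the product automaton of Theorem~\ref{THM:wffa_restriction_eha}. Let \(\W'\) be the WFFA with \(\sem{\W'}=\sem{\W}\cap\mathcal L(\H)\), where every transition of \(\W'\) has the form \(((p,q),a,(p',q'))\) with \(q'=\delta(q,a)\) and weight \(g_{\rho(q')}\otimes e\), \(e\) being the expression of the corresponding transition of \(\W\). Then \(\best_{\mathcal L(\H)}(\sem\W)=\sup\{\sem{\W'}(w)\}\) over \(\dom(\sem{\W'})\). Because \(\sem{\W'}(w)\) is a max over runs, we may exchange the suprema:
\[
\best=\sup_{\varrho}\ \wt_I(q_0)+\sum_{i}\sup_{d\in J(t_i)}\sem{\wt_T(t_i)}(d)+\wt_F(q_n),
\]
where the outer supremum ranges over accepting paths \(\varrho\) in the finite graph of \(\W'\), and \(J(t)\) is the set of data values enabling \(t\). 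The data values at distinct positions are independent, so the optimisation decomposes transition by transition.

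\textbf{Step 1 (local analysis).} For each local payoff expression \(e\) over \(\FF_{\lin}\), we show that \(\sem e\) is piecewise of the form \(d\mapsto \max\) of finitely many affine functions \(\alpha d+\beta\), or \(-\infty\). The pieces lie on a partition of \(\RR_{\ge0}\) into finitely many points and intervals with rational endpoints. These endpoints are determined by the guards \(e_1=e_2\) and \(e_1\neq e_2\), since equality of two piecewise-affine functions holds on finitely many points or whole pieces. We prove this by structural induction, with the number of pieces and the bit-sizes polynomial in the size of \(e\) (bit-sizes grow additively under \(\otimes\) and under the intersection points). From this we compute \(\sigma(t)=\sup_d\sem{\wt_T(t)}(d)\in\RR\cup\{\pm\infty\}\) and \(\iota(t)=\inf\{\sem{\wt_T(t)}(d)\mid \sem{\wt_T(t)}(d)\neq-\infty\}\). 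We also record whether each is attained, which gives witnesses. The main obstacle is this step: nested guards could blow up the piece count. I would first try to bound the pieces by the number of guard subterms plus the number of \(\oplus\) nodes. If that fails, I would instead evaluate the expression only at a polynomial set of candidate breakpoints (interval endpoints and pairwise intersections of constituent affine terms) together with the \(d\to\infty\) slopes.

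\textbf{Step 2 (best case).} Restrict the graph to transitions with \(\sigma(t)>-\infty\), keeping only states that are reachable and co-reachable. If a remaining transition has \(\sigma=+\infty\), or a reachable co-reachable cycle has positive total weight, the best case is \(+\infty\). Otherwise we compute the longest-path value with Bellman--Ford on negated weights, in polynomial time and with rational arithmetic of polynomial size. The optimal path, together with the maximising data values from Step 1, gives a witness history. If no accepting path remains, the scenario is infeasible.

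\textbf{Step 3 (worst case, deterministic \(\W\)).} Determinism of \(\W\), together with determinism of \(\H\), makes \(\W'\) deterministic. Hence each feasible word has exactly one run with finite weight, and \(\sem{\W'}(w)\) equals that run's weight. This lets the infimum split over paths in the same way:
\[
\worst=\inf_\varrho\Big(\wt_I+\sum_i\iota(t_i)+\wt_F\Big).
\]
Here a different enabled run cannot mask a low value, because there is no competing maximum. We then run the symmetric shortest-path computation with negative-cycle detection, which yields \(-\infty\) on a negative reachable cycle. Without determinism the max over runs does not commute with the outer inf, which explains the hypothesis.
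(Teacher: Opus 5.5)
Your proposal follows essentially the same route as the paper. You restrict via the product of Theorem~\ref{THM:wffa_restriction_eha}, collapse each transition expression to its supremum (your $\sigma(t)$ is the paper's $M_t$), use independence of the data values across positions to identify the supremum over words with a longest-path value, and finish with positive-cycle detection and Bellman--Ford. Your worst case via the finite-value infimum $\iota(t)$ is just a direct form of the paper's negation of affine pieces followed by the best-case procedure on $-\W$. You also make explicit that determinism of $\W$ carries over to the product, which the paper uses tacitly.

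The local step you single out as the main obstacle is not proved in the paper either. It is imported from~\cite{DN26} (Lemma~9.7 for the per-transition supremum, Lemma~6.23 for the affine-piece decomposition), without an explicit polynomial bound on the number of pieces, so your caution is justified. If you prove it yourself, your fallback must test crossings of sums of leaf terms, not only of the leaves. For instance, $\bigl(((\bind{1}\otimes\bind{1})\oplus 5)=5\bigr)\otimes\bind{1}$ has supremum $2.5$, attained only at $d=2.5$, which is not an intersection of any two leaves.
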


\section{Case Study: Autocallable Structured Product}

\emph{Autocallable structured products} are path-dependent instruments
with early redemption features; see, e.g.,~\cite{Gui15}.
We consider a simplified autocallable structured product with nominal value \(100\)
and horizon \(n\). 
At each observation date, the investor receives a coupon of \(2\), unless the
contract has already terminated. Two consecutive significant positive movements
trigger early redemption with nominal repayment and a \(10\%\) \emph{participation
premium} on the positive performance above the initial level \(100\). Two consecutive
significant negative movements deactivate protection; if maturity is reached
afterwards, the investor receives the terminal underlying value capped at \(100\).

In this case study, the payoff means the total undiscounted cash amount received by
the investor along a contract path, including coupons, redemption payments, and
possible participation premia. Discounting is omitted in order to keep the example
focused on path dependence and scenario constraints.

We model the payoff logic by a WFFA \(\mathcal P\) over the linear finance semiring
\(\FF_{\lin}\). The data domain is \(\DD=\RR_{\ge 0}\), and data values represent
observed underlying levels. The event alphabet is
\(
\Sigma=\{\texttt{f},\texttt{u},\texttt{d},\texttt{e},\texttt{a}\}.
\)
Here, \(\texttt{f}\), \(\texttt{u}\), and \(\texttt{d}\) denote flat movements,
significant upward movements, and significant downward movements of the
underlying price, respectively. The symbols
\(\texttt{e}\) and \(\texttt{a}\) denote regular expiration and autocall redemption.
\begin{figure}[ht]
\centering
\begin{tikzpicture}[
    >=stealth,
    every node/.style={font=\scriptsize},
    state/.style={circle, draw, minimum size=20pt, inner sep=1pt},
    scale=0.9, transform shape
]
\node (hi) at (0.25,-1) {};
\node[state] (h) at (1.25,0) {$H$};
\node[state] (b) at (4.25,0) {$B$};
\node[state] (bb) at (3.25,1.25) {$BB$};
\node[state] (a) at (5.25,1.25) {$A$};
\node[state] (w) at (4.25,-1.5) {$W$};
\node[state] (d) at (7.25,-1.5) {$D$};
\node[state] (e) at (7.25, 1.5) {$E$};
\node (af) at (6.5, 1.25) {};
\node (ef) at (8.5, 1.5) {};

\draw[->]
    (hi) edge[above] (h)
    (h) edge[loop left] node{$\texttt{f} \mid 2.0$} (h)
    (h) edge[out=90, in=175, looseness=1.0, above] node[xshift=80pt, yshift=-6pt]{$\texttt{e} \mid 100.0$} (e)
    (h) edge[bend left, above] node{$\texttt{u} \mid 2.0$} (b)
    (h) edge[out=285,in=165,looseness=0.6] node[xshift=-6pt, yshift=16pt]{$\texttt{d} \mid 2.0$} (w)
    (b) edge[above] node[yshift=-3pt]{$\texttt{f} \mid 2.0$} (h)
    (b) edge[right] node[yshift=5pt]{$\texttt{u} \mid 2.0$} (bb)
    (b) edge[bend left] node[right]{$\texttt{d} \mid 2.0$} (w)
    (b) edge[bend right] node[yshift=-13pt, xshift=-15pt]{$\texttt{e} \mid 100.0$} (e)
    (bb) edge[above] node{$\texttt{a} \mid \mathrm{expr}_{A}$} (a)
    (w) edge[out=180,in=270,looseness=1.2] node[below left]{$\texttt{f} \mid 2.0$}(h)
    (w) edge[bend right] node[right]{$\texttt{e} \mid 100.0$} (e)
    (w) edge[above] node[left, yshift=-2pt]{$\texttt{u} \mid 2.0$} (b)
    (w) edge[below] node{$\texttt{d} \mid 2.0$} (d)
    (d) edge[loop right] node{$\texttt{f}, \texttt{u}, \texttt{d}\mid 2.0$} (d)
    (d) edge[bend right] node[right]{$\texttt{e} \mid \min(\llangle 1.0 \rrangle, 100.0)$} (e)
    (a) edge[above] (af)
    (e) edge[above] (ef);

\end{tikzpicture}
\caption{WFFA \(\mathcal{P}\) representing the payoff structure of the autocallable contract.}
\label{fig:structured_payoff}
\end{figure}
Figure~\ref{fig:structured_payoff} shows the WFFA \(\mathcal P\). Its states encode the relevant path-dependent status of the contract:
\(H\) (healthy) is the normal protected state, \(B\) (bullish) records one
significant positive movement, \(BB\) (bullish-bullish) records two consecutive
significant positive movements, \(W\) (weakening) records one significant negative
movement, and \(D\) (distressed) records protection loss after two consecutive
significant negative movements. The states \(A\) (autocall) and \(E\) (expired) are
terminal states for early redemption and regular expiration, respectively.
The initial state is \(H\), and the final states are \(A\) and \(E\). Their final
weights are \(0.0\) and are omitted in Figure~\ref{fig:structured_payoff}.
Non-terminal transitions carry the coupon contribution \(2.0\). The transition
\(BB\to A\) models autocall redemption and has weight 
\[
{\mathrm{expr}_{A}
=
100.0 \otimes \bigl((\bind{0.1}\otimes(-10.0))\oplus 0.0\bigr),
}\]
so that
\(
\sem{\mathrm{expr}_{A}}(d)
=
100.0+\max(0.1(d-100.0),0.0).
\)
Here, \(100.0\) represents the nominal repayment, and the second term is the
participation premium paid on positive performance above the initial level.
The transition \(D\to E\) models maturity after protection loss and pays
\(
\min(d,100.0).
\)
This payoff can be expressed by local payoff expressions over \(\FF_{\lin}\), using
the guard constructions from Example~2.9 in~\cite{DN26}. The automaton is
deterministic, since the next contract state is uniquely determined by the current
state and the observed event.

The automaton \(\mathcal P\) describes the payoff component independently of a
fixed contract duration. The finite contract structure, including the observation horizon and admissible termination events, is represented by a separate WFFA component and combined with \(\mathcal P\). We omit this auxiliary component from the presentation and focus on the payoff and scenario-constraint aspects relevant to the analysis.

\paragraph{Scenario configurations.}
We evaluate the payoff model under the three pattern-based scenario systems
summarized in Table~\ref{tab:scenario_systems}. Each system uses a priority
resolver with default interval \(J_0=[99,101]\). Thus, if no listed override
constraint is active, admissible prices are restricted to the default interval.

The listed constraints describe market-regime assumptions by assigning admissible
price intervals to event histories matching certain patterns. Priorities are given
by the order in which constraints are listed in Table~\ref{tab:scenario_systems}:
earlier constraints override later ones. For example, the constraint
\((\Sigma^{\ge 4}\cdot\texttt{u}\cdot\Sigma^*,\emptyset)\) in \(S_3\)
excludes histories in which an upward movement occurs after the fourth event.
Intuitively, this represents a scenario in which the underlying is not expected
to move upward in the later part of the analysis horizon.

\begin{table}[ht]
\centering
\scriptsize
\renewcommand{\arraystretch}{1.15}
\setlength{\tabcolsep}{3pt}
\begin{tabular}{c|l|m{7cm}}
Scenario & Constraints & Interpretation\\
\hline

\(S_1\) &
\begin{tabular}[c]{@{}l@{}}
\((\Sigma^*\cdot\texttt{uu}\cdot\Sigma^*,[180,220])\)\\
\((\Sigma^*\cdot\texttt{dd}\cdot\Sigma^*,[40,90])\)
\end{tabular}
&
High-volatility scenario: consecutive upward movements lead to high admissible
prices, while consecutive downward movements lead to stressed prices.
\\
\hline

\(S_2\) &
\begin{tabular}[c]{@{}l@{}}
\((\Sigma^*\cdot\texttt{uu}\cdot\Sigma^*,[105,119])\)\\
\((\Sigma^*\cdot\texttt{dd}\cdot\Sigma^*,[80,95])\)
\end{tabular}
&
Moderate market scenario: trend patterns are allowed, but price intervals remain
close to the nominal level.
\\
\hline

\(S_3\) &
\begin{tabular}[c]{@{}l@{}}
\((\Sigma^{\ge 4}\cdot\texttt{u}\cdot\Sigma^*,\emptyset)\)\\
\((\Sigma^*\cdot\texttt{uu}\cdot\Sigma^*,[140,180])\)\\
\((\Sigma^*\cdot\texttt{dd}\cdot\Sigma^*,[40,90])\)
\end{tabular}
&
Stress-oriented scenario: late upward movements are excluded, while downward
patterns remain associated with stressed prices.
\\

\end{tabular}
\vspace{0.2cm}
\caption{Scenario systems used in the evaluation.}
\label{tab:scenario_systems}
\end{table}
For each scenario system \(S_i\), \(i=1,2,3\), we first compile \(S_i\) into an
EHA \(\mathcal H_i\) using Theorem~\ref{THM:equiv_pattern_constraints_eha}. The full contract model is obtained by synchronizing the payoff WFFA \(\mathcal P\) with a fixed deterministic WFFA \(\mathcal T_n\) encoding the contract horizon and valid termination events. We then restrict this contract WFFA by \(\mathcal H_i\) using Theorem~\ref{THM:wffa_restriction_eha}.
The resulting WFFA represents exactly the executions that satisfy both the
contract structure and the scenario constraints, while preserving the payoff
accumulated by \(\mathcal P\). Best-case and worst-case payoffs, together with
witness event histories, are computed using
Theorem~\ref{THM:extremal_payoff_analysis}.

\subsection{Experimental Setup}

We implemented a Java research prototype that supports an end-to-end workflow from
declarative model specification to extremal payoff analysis. Contract models and
scenario assumptions are specified by JSON task descriptions; pattern-based
scenario constraints are then compiled into EHAs.
The resulting components are combined using synchronized product constructions.
The final scenario-constrained WFFA is then analyzed to compute exact best-case
and worst-case payoffs together with witness event histories.

The evaluation considers two aspects. First, we compare the extremal payoffs
obtained under the scenario systems from Table~\ref{tab:scenario_systems}. Second,
we measure scalability with respect to the contract horizon and the number of
scenario constraints. For each configuration, we record the size of the constructed
automata and the end-to-end runtime.

All experiments were run with a Java prototype on OpenJDK~21.0.1 using Gradle on an
Apple MacBook Pro with an Apple M1 Max processor and 64 GB of memory. No custom JVM
tuning parameters were used. Reported
runtimes are averages over 10 independent executions and include parsing, automata
construction, product construction, extremal payoff computation, and witness
extraction. The prototype and experiment files are publicly available in the accompanying
repository.\footnote{\url{https://gitlab.com/vitnberg/wffa}}

\section{Evaluation Results}

\paragraph{Scenario-Constrained Payoff Analysis.}

We first evaluate exact extremal payoffs under the scenario systems from
Table~\ref{tab:scenario_systems}. For each configuration, the framework computes
best-case and worst-case payoffs together with witness event histories; the
results are shown in Table~\ref{tab:extremal_analysis_results}.
\begin{table}[ht]
\centering
\scriptsize
\setlength{\tabcolsep}{4pt}
\begin{tabular}{c|c|c|c|l|l}
Scenario & Horizon & Best & Worst
& Best witness & Worst witness\\
\hline
\(S_1\) & \(8\)  & \(126.0\) & \(56.0\)  & late autocall & protection-loss expiration\\
\(S_1\) & \(64\) & \(238.0\) & \(112.0\) & very late autocall & early autocall\\
\hline
\(S_2\) & \(8\)  & \(116.0\) & \(96.0\)  & regular expiration & mild distressed expiration\\
\(S_2\) & \(64\) & \(228.0\) & \(104.5\) & long coupon accumulation & early autocall\\
\hline
\(S_3\) & \(8\)  & \(116.0\) & \(56.0\)  & early autocall & protection-loss expiration\\
\(S_3\) & \(64\) & \(228.0\) & \(108.0\) & no late upward moves & early autocall\\
\end{tabular}
\vspace{0.2cm}
\caption{Extremal payoff results under scenario constraints.}
\label{tab:extremal_analysis_results}
\end{table}

Table~\ref{tab:extremal_analysis_results} shows that the same payoff automaton can
exhibit qualitatively different extremal behavior under different scenario
constraints. In \(S_1\), best-case
payoffs are obtained by delaying autocall redemption and accumulating coupons. For
long horizons, however, early autocall may become worst-case, because premature
termination prevents further coupon accumulation. This illustrates that a locally
favorable event may be globally adverse once path-dependent cash-flow accumulation
is taken into account.
Scenario \(S_2\) yields narrower payoff ranges, reflecting its moderate price
intervals. Scenario \(S_3\) excludes upward movements after the fourth event. Consequently,
for longer horizons the best-case payoff must be realized before this restriction
becomes binding, or through coupon accumulation along paths that remain admissible
without late upward movements. In all cases, the framework returns not only exact extremal values but
also witness histories explaining how these values are realized.

\paragraph{Scalability results.}
We study scalability along two dimensions: the contract horizon \(n\) and the
number \(|\mathbb C|\) of scenario constraints.
First, we vary the contract horizon \(n\) using scenario system \(S_1\). Figure~\ref{fig:scalability_horizon} reports the size of
the constructed product automaton and the corresponding end-to-end runtime.
The results show an approximately linear growth in the number of product states and
transitions as the horizon increases. Runtime grows moderately and remains below
seven seconds even for \(n=512\). Thus, for this scenario system, the exact
scenario-constrained extremal analysis remains tractable for long contract
horizons.
\begin{figure}[ht]
\centering
\begin{minipage}{0.45\textwidth}
\centering
\scriptsize
\setlength{\tabcolsep}{3pt}
\begin{tabular}{c|r|r|r}
$n$ & States & Transitions & Runtime (s)\\
\hline
4   & 1029  & 1134   & 0.061\\
8   & 1617  & 2226   & 0.093\\
16  & 2793  & 4410   & 0.148\\
32  & 5145  & 8778   & 0.273\\
64  & 9849  & 17514  & 0.609\\
128 & 19257 & 34986  & 1.280\\
256 & 38073 & 69930  & 2.715\\
512 & 75705 & 139818 & 6.628\\
\end{tabular}
\end{minipage}
\hfill
\begin{minipage}{0.5\textwidth}
\centering
\begin{tikzpicture}
\begin{axis}[
    width=\textwidth,
    height=0.62\textwidth,
    ylabel={Runtime (s)},
    grid=major,
    tick label style={font=\scriptsize},
    label style={font=\scriptsize},
    xlabel={},
    extra description/.code={
        \node[font=\scriptsize, anchor=north east]
        at (axis description cs:1.05,-0.01) {$n$};
    },
]

\addplot [
    black,
    very thick,
    mark=*,
    mark options={fill=black}
]coordinates {
    (4,0.061)
    (8,0.093)
    (16,0.148)
    (32,0.273)
    (64,0.609)
    (128,1.280)
    (256,2.715)
    (512,6.628)
};
\end{axis}
\end{tikzpicture}
\end{minipage}
\caption{Scalability with respect to the contract horizon for scenario system \(S_1\).}
\label{fig:scalability_horizon}
\end{figure}

Second, we vary the number \(|\mathbb C|\) of scenario constraints while keeping the
contract horizon fixed at \(n=8\). Figure~\ref{fig:scalability_constraints} reports the
resulting product-automaton sizes and end-to-end runtimes.
In this experiment, increasing the number of constraints has a stronger effect
than increasing the horizon. Product sizes and runtimes grow noticeably for larger
constraint sets, reflecting synchronization effects between overlapping event
patterns. At the same time, exact extremal analysis remains practical for the
considered range of scenario systems, despite the observed product-state growth.

\begin{figure}[ht]
\centering
\begin{minipage}{0.45\textwidth}
\centering
\scriptsize
\setlength{\tabcolsep}{3pt}
\vspace{-0.3cm}
\begin{tabular}{c|r|r|r}
\(|\mathbb C|\) & States & Transitions & Runtime (s)\\
\hline
1  & 539    & 742    & 0.050\\
2  & 1617   & 2226   & 0.084\\
3  & 2002   & 2756   & 0.108\\
4  & 6468   & 8904   & 0.242\\
5  & 9163   & 12614  & 0.337\\
6  & 41657  & 57346  & 1.334\\
7  & 62601  & 86178  & 2.011\\
8  & 77847  & 107166 & 2.458\\
9  & 105875 & 145750 & 4.954\\
10 & 510664 & 702992 & 16.805\\
\end{tabular}
\end{minipage}
\hfill
\begin{minipage}{0.5\textwidth}
\centering
\begin{tikzpicture}
\begin{axis}[
    width=\textwidth,
    height=0.62\textwidth,
    ylabel={Runtime (s)},
    grid=major,
    tick label style={font=\scriptsize},
    label style={font=\scriptsize},
    xtick={1,2,3,4,5,6,7,8,9,10},
    xlabel={},
    extra description/.code={
        \node[font=\scriptsize, anchor=north east]
        at (axis description cs:1.1,0.01) {\(|\mathbb C|\)};
    },
]

\addplot [
    black,
    very thick,
    mark=*,
    mark options={fill=black}
] coordinates {
    (1,0.050)
    (2,0.084)
    (3,0.108)
    (4,0.242)
    (5,0.337)
    (6,1.334)
    (7,2.011)
    (8,2.458)
    (9,4.954)
    (10,16.805)
};
\end{axis}
\end{tikzpicture}
\end{minipage}
\caption{Scalability with respect to the number of scenario constraints for fixed contract horizon \(n=8\).}
\label{fig:scalability_constraints}
\end{figure}

\section{Conclusion and Future Work}

This paper introduced an automata-based framework for exact extremal payoff
analysis under declarative scenario constraints. The central idea is to separate
the deterministic contract model from assumptions about future market evolution:
payoffs are represented by WFFAs, while admissible scenarios are specified by
event history automata or compiled from pattern-based constraint systems.

The resulting product construction enables best-case and worst-case payoff
analysis over complete languages of admissible event histories, rather than over
sampled paths. In addition to extremal values, the framework produces witness
histories that make the corresponding outcomes interpretable.

From a computational perspective, the scalability experiments quantify the cost of
the synchronized product constructions underlying the framework. As expected,
product sizes and runtimes increase when horizons grow or when multiple scenario
constraints interact. However, the prototype still computes exact extremal values
and witness histories across all tested configurations, showing that the approach
is practically tractable for moderately large horizons and nontrivial scenario
systems.

Future work can proceed in several directions. First, the current EHA model assigns
admissible intervals from finite-state event histories. Many realistic market
assumptions, however, involve quantitative dependencies between observations, such
as bounded relative changes, reversals, drawdown constraints, or dynamically
evolving reference levels. Extending the framework toward multi-valued finance
words and models such as weighted register automata~\cite{SYT21} may provide a way to capture
such dependencies while preserving algorithmic analyzability.

Second, the synchronized product constructions used in this paper are explicit.
An important algorithmic direction is the development of lazy or on-the-fly
product constructions that generate only those combined states needed during
extremal payoff analysis. This could improve scalability for scenario systems
with many interacting constraints.

Finally, the same methodology could be applied beyond structured products. One
promising direction is the analysis of rule-based investment or trading strategies,
where technical-analysis patterns~\cite{Mur99} and event-driven signals can be represented as
formal scenario constraints and evaluated against quantitative payoff objectives.

\begin{credits}
\subsubsection{\discintname}
The author has no competing interests to declare that are relevant to the content of this article.
\end{credits}

\bibliographystyle{splncs04}
\bibliography{eha}

\appendix

\section{Proofs}

\subsection{Proof of Theorem~\ref{THM:equiv_pattern_constraints_eha}}

Recall that a deterministic finite automaton over \(\Sigma\) is a tuple
\(\mathcal A=(Q,q_0,\delta,F)\), where \(Q\) is a finite set of states,
\(q_0\in Q\) is the initial state, \(\delta:Q\times\Sigma\to Q\) is the transition
function, and \(F\subseteq Q\) is the set of final states.

A word \(u\in\Sigma^*\) is accepted by \(\mathcal A\) if the unique run of
\(\mathcal A\) on \(u\) ends in a final state. The language recognized by
\(\mathcal A\), denoted \(L(\mathcal A)\), is the set of all words accepted by
\(\mathcal A\).

We use Kleene's theorem and the standard determinization construction
(cf.~\cite{Kle56,HU79}): every regular expression \(E\in\Reg(\Sigma)\) can be translated into a deterministic finite automaton \(\mathcal A_E\) such that
\(L(\mathcal A_E)=\sem{E}\).

\paragraph{\textbf{Theorem~\ref{THM:equiv_pattern_constraints_eha}.}}
{\itshape
Let \(\Sigma\) be a finite event alphabet and let \(\DD\) be the quantitative domain.

\begin{enumerate}[label=(\alph*)]
\item For every pattern-based scenario constraint system \(\mathcal S\) over \(\Sigma\) and \(\DD\), there exists an EHA \(\H\) over \(\Sigma\) and \(\DD\) such that
\(
\sem{\H}=\sem{\mathcal S}.
\)

\item For every EHA \(\H\) over \(\Sigma\) and \(\DD\), there exists a pattern-based scenario constraint system \(\mathcal S\) over \(\Sigma\) and \(\DD\) such that
\(
\sem{\mathcal S}=\sem{\H}.
\)
\end{enumerate}
}

\begin{proof}
\begin{enumerate}[label=(\alph*)]
\item Let \(\mathcal S=(\mathbb C,\operatorname{Res})\) be a pattern-based scenario
constraint system over \(\Sigma\) and \(\DD\), with
\(\mathbb C=(c_1,\dots,c_m)\) and \(c_i=(E_i,J_i)\) for \(i\in\{1, \dots, m\}\).

We construct an event history automaton \(\H\) in two steps.

\begin{enumerate}
\item[(i)] For each \(i\in\{1,\dots,m\}\), choose a complete deterministic finite
automaton
\(
\mathcal D_i=(Q_i,q_{0,i},\delta_i,F_i)
\)
over \(\Sigma\) such that
\(
L(\mathcal D_i)=\sem{E_i}.
\)

\item[(ii)] Define the EHA
\(
\H=(Q,q_0,\delta,\rho)
\)
as the synchronous product of the automata \(\mathcal D_1,\dots,\mathcal D_m\).
Let
\(
Q=Q_1\times\cdots\times Q_m
\)
and
\(
q_0=(q_{0,1},\dots,q_{0,m}).
\)
The transition function \(\delta:Q\times\Sigma\to Q\) is defined by
\[
\delta((q_1,\dots,q_m),a)
=
(\delta_1(q_1,a),\dots,\delta_m(q_m,a))
\]
for all \((q_1,\dots,q_m)\in Q\) and \(a\in\Sigma\). Finally, define the output
function \(\rho:Q\to\mathcal I(\DD)\) by
\[
\rho(q_1,\dots,q_m)
=
\operatorname{Res}\bigl(\{i\in\{1, \dots, m\}\mid q_i\in F_i\}\bigr).
\]
\end{enumerate}
It remains to show that \(\sem{\H}=\sem{\mathcal S}\). Let \(u\in\Sigma^*\).
By construction of the product automaton, the \(i\)-th component of
\(\operatorname{last}_{\H}(u)\) is precisely
\(\operatorname{last}_{\mathcal D_i}(u)\).
Hence,
\(
\operatorname{last}_{\mathcal D_i}(u)\in F_i
\iff
u\in \sem{E_i}.
\)
Therefore,
\[
\{i\in\{1, \dots, m\}\mid \operatorname{last}_{\mathcal D_i}(u)\in F_i\}
=
\operatorname{Act}_{\mathcal S}(u).
\]
It follows that
\[
\sem{\H}(u)
=
\rho(\operatorname{last}_{\H}(u))
=
\operatorname{Res}(\operatorname{Act}_{\mathcal S}(u))
=
\sem{\mathcal S}(u).
\]
Since this holds for all \(u\in\Sigma^*\), we obtain
\(
\sem{\H}=\sem{\mathcal S}.
\)

\item For the converse direction, let \(\H=(Q,q_0,\delta,\rho)\) be an EHA over
\(\Sigma\) and \(\DD\). Without loss of generality, assume that \(Q=\{1,\dots,m\}\).
For each \(j\in\{1,\dots,m\}\), let
\[
L_j=\{u\in\Sigma^*\mid \operatorname{last}_{\H}(u)=j\}.
\]
This language is recognized by the deterministic finite automaton
\((Q,q_0,\delta,\{j\})\), obtained from \(\H\) by taking \(j\) as the only final
state. Hence, choose a regular expression \(E_j\in\Reg(\Sigma)\) such that
\(\sem{E_j}=L_j\).

Define \(\mathcal S=(\mathbb C,\operatorname{Res})\) by taking
\(\mathbb C=(c_1,\dots,c_m)\), where \(c_j=(E_j,\rho(j))\), and letting
\(\operatorname{Res}=\operatorname{Res}_{\mathrm{prio}}^{J_0}\) for an arbitrary
interval \(J_0\in\mathcal I(\DD)\).

For every \(u\in\Sigma^*\), exactly one constraint is active, namely the one
corresponding to \(\operatorname{last}_{\H}(u)\). Hence, if
\(\operatorname{last}_{\H}(u)=j\), then
\(
\operatorname{Act}_{\mathcal S}(u)=\{j\}
\).
Since the active set is a singleton, the priority resolver returns
\(
\operatorname{Res}(\{j\})=\rho(j)
\).
Therefore,
\[
\sem{\mathcal S}(u)
=
\operatorname{Res}(\operatorname{Act}_{\mathcal S}(u))
=
\rho(j)
=
\sem{\H}(u).
\]
Since this holds for all \(u\in\Sigma^*\), we obtain
\(\sem{\mathcal S}=\sem{\H}\).
\qed
\end{enumerate}
\end{proof}

\subsection{Proof of Theorem~\ref{THM:wffa_restriction_eha}}

\paragraph{\textbf{Theorem~\ref{THM:wffa_restriction_eha}}.}
{\itshape
Let \(\FF\) be interval-complete. For every EHA \(\H\) over \(\Sigma\) and
\(\DD\), and every WFFA \(\mathcal W\) over \(\Sigma\) and \(\FF\), there exists
a WFFA \(\mathcal W'\) over \(\Sigma\) and \(\FF\) such that
\(
\sem{\mathcal W'}=\sem{\mathcal W}\cap \mathcal L(\H),
\)
where \(\mathcal L(\H)\) denotes the scenario language induced by \(\H\).
}

\begin{proof}
Let:
\begin{itemize}
\item \(\H=(P,p_0,\delta,\rho)\) be an EHA over \(\Sigma\) and \(\DD\);
\item \(\W=(Q,I,T,F,\wt_I,\wt_T,\wt_F)\) be a WFFA over \(\Sigma\) and \(\FF\).
\end{itemize}
Since \(\FF\) is interval-complete, for every state \(p\in P\) there exists a
local payoff expression \(g_p\in\EE_{\FF}\) representing membership in the
interval \(\rho(p)\in\mathcal I(\DD)\), i.e., for all \(d\in\DD\),
\[
\sem{g_p}(d)=
\begin{cases}
\one, & d\in\rho(p),\\
\zero, & d\notin\rho(p).
\end{cases}
\]

We define \(\W'\) as the product automaton whose states record both the current
state of \(\H\) and the current state of \(\W\). The EHA component determines the
interval guard to be applied after reading the next event. Formally, let
\(\W'=(Q',I',T',F',\wt'_I,\wt'_T,\wt'_F)\), where:
\begin{itemize}
\item \(Q'=P\times Q\), \(I'=\{p_0\} \times I\), \(F'=P\times F\);
\item \(T'\) consists of all transitions
\(
((p,q),a,(p',q'))
\)
such that \(p'=\delta(p,a)\) and \((q,a,q')\in T\);
\item \(\wt'_I(p_0,q)=\wt_I(q)\) for all \(q\in I\);
\item \(\wt'_F(p,q)=\wt_F(q)\) for all \(p\in P\) and \(q\in F\);
\item for \(t'=((p,q),a,(p',q'))\in T'\), with \(t=(q,a,q')\in T\), set
\[
\wt'_T(t')=\wt_T(t)\otimes g_{p'}.
\]
Here \(\wt_T(t),g_{p'}\in\EE_{\FF}\), so the expression is well-defined by closure
of \(\EE_{\FF}\) under \(\otimes\).
\end{itemize}

Let \(w=(a_1,d_1)\dots(a_n,d_n)\). The EHA has a unique run
\(p_0,p_1,\dots,p_n\) on \(a_1\dots a_n\), where
\(p_i=\delta(p_{i-1},a_i)\). Runs of \(\W'\) on \(w\) are in one-to-one
correspondence with runs of \(\W\) on \(w\): a run
\(q_0,\dots,q_n\) of \(\W\) corresponds to
\((p_0,q_0),\dots,(p_n,q_n)\) in \(\W'\).

If \(w\in\mathcal L(\H)\), then \(d_i\in\rho(p_i)\) for every \(i\). Hence
\(\sem{g_{p_i}}(d_i)=\one\), and every corresponding run in \(\W'\) has the same
weight as the original run in \(\W\). Therefore \(\sem{\W'}(w)=\sem{\W}(w)\).

If \(w\notin\mathcal L(\H)\), then for some \(i\) we have
\(d_i\notin\rho(p_i)\). Hence \(\sem{g_{p_i}}(d_i)=\zero\), so every run of
\(\W'\) on \(w\) has weight \(\zero\). Thus \(\sem{\W'}(w)=\zero\).

Consequently,
\[
\sem{\W'}(w)=
\begin{cases}
\sem{\W}(w), & w\in\mathcal L(\H),\\
\zero, & \text{otherwise,}
\end{cases}
\]
for all \(w\in(\Sigma\times\DD)^*\). Hence
\(\sem{\W'}=\sem{\W}\cap\mathcal L(\H)\).
\qed
\end{proof}

\subsection{Proof of Theorem~\ref{THM:extremal_payoff_analysis}}

\paragraph{\textbf{Theorem~\ref{THM:extremal_payoff_analysis}}.}
{\itshape
Let \(\H\) be an EHA over \(\Sigma\) and \(\DD\), and let \(\W\) be a WFFA over
\(\Sigma\) and \(\FF_{\lin}\). Assume that all numerical constants occurring in
\(\W\) and \(\H\) are rational numbers encoded in binary. Then the best-case value
\(\best_{\mathcal L(\H)}(\sem{\W})\) can be computed in polynomial time in the
size of the product automaton and the bit-size of the numerical constants. If
\(\W\) is deterministic, then the worst-case value
\(\worst_{\mathcal L(\H)}(\sem{\W})\) can also be computed in polynomial time in
the size of the product automaton and the bit-size of the numerical constants.
}

\begin{proof}
We give the argument for the best-case value; the worst-case value is analogous
under determinism. By Theorem~\ref{THM:wffa_restriction_eha}, we can construct a WFFA \(\W'\) such
that
\(
\sem{\W'}=\sem{\W}\cap\mathcal L(\H).
\)
Therefore,
\[
\best_{\mathcal L(\H)}(\sem{\W})
=
\sup_{w\in\mathcal L(\H)\cap\dom(\sem{\W})}\sem{\W}(w)
=
\sup_{w\in\dom(\sem{\W'})}\sem{\W'}(w).
\]
Thus, it suffices to compute the unrestricted supremum of the
scenario-constrained WFFA $\W'$. In the following, write this WFFA as
\(
\W'=(Q,I,T,F,\wt_I,\wt_T,\wt_F).
\)
For the linear finance semiring, Lemma~9.7 in~\cite{DN26} gives an effective
procedure for computing
\(
\sup_{d\in\DD}\sem{\wt_T(t)}(d)
\)
for each transition expression \(\wt_T(t)\).
For each transition \(t\in T\), compute
\(
M_t=\sup_{d\in\DD}\sem{\wt_T(t)}(d).
\)
If \(M_t=-\infty\), then \(t\) is infeasible and can be removed. If
\(M_t=+\infty\), then the best-case value is \(+\infty\) whenever \(t\) lies on
some feasible initial-to-final path. This condition is checked by standard graph reachability after removing infeasible transitions.

It remains to consider transitions with finite \(M_t\). Replacing each such
transition expression by the constant weight \(M_t\) yields a classical max-plus
weighted automaton \(\W_{\max}\) over the feasible finite-weight part of the same
underlying graph. Its supremum value can then be computed by the standard
graph-based analysis for weighted automata, as in~\cite{ABK22}.

After removing infeasible transitions, \(\dom(\sem{\W})=\emptyset\) iff
\(\dom(\sem{\W_{\max}})=\emptyset\). If the domains are empty, the scenario is
infeasible. Otherwise, let
\[
S=\sup\{\sem{\W}(w)\mid w \in (\Sigma \times \DD)^* \text{ and } \sem{\W}(w) \neq -\infty\}
\]
and
\[
S_{\max}=\sup\{\sem{\W_{\max}}(u)\mid u\in\Sigma^*
\text{ and } \sem{\W_{\max}}(u) \neq -\infty\}.
\]
We show that \(S=S_{\max}\).

First, \(S\le S_{\max}\). Indeed, every accepting run of \(\W\) on a finance word
\(w=(a_1,d_1)\dots(a_n,d_n)\) determines an accepting path
\(\pi=t_1\dots t_n\) in \(\W_{\max}\). Since
\(\sem{\wt_T(t_i)}(d_i)\le M_{t_i}\) for all \(i \in \{1, \dots, n\}\) and \(d_i \in \DD\), the payoff of this run is at
most the weight of \(\pi\) in \(\W_{\max}\).

Conversely, let \(\pi=t_1\dots t_n\) be an accepting path in \(\W_{\max}\) with
finite weight and let \(\eta>0\). We show that there is a finance word whose
payoff is greater than \(\operatorname{wt}_{\W_{\max}}(\pi)-\eta\). If \(n=0\),
then the empty finance word realizes the same initial-final weight, so the claim
is immediate. Assume \(n>0\). Write \(t_i=(q_{i-1},a_i,q_i)\). By the definition of
\(M_{t_i}\), for each \(i\) there exists \(d^{\eta}_i\in\DD\) such that
\[
\sem{\wt_T(t_i)}(d^{\eta}_i)>M_{t_i}-\frac{\eta}{n} .
\]
Here the choices of \(d^{\eta}_i\) are independent: once the event path
\(a_1\dots a_n\) is fixed, automaton states and scenario guards are fixed by the
event history, and each transition expression depends only on the current value
\(d^{\eta}_i\). Hence the finance word
\(
w=(a_1,d^{\eta}_1)\dots(a_n,d^{\eta}_n)
\)
admits the corresponding run in \(\W\), whose payoff is greater than the weight
of \(\pi\) minus \(\eta\). Therefore
\(S\ge \operatorname{wt}_{\W_{\max}}(\pi)-\eta\). Since \(\pi\) and
\(\eta\) were arbitrary, \(S\ge S_{\max}\).

Thus \(S=S_{\max}\).

The value \(S_{\max}\) is computed by standard graph analysis for max-plus
automata. We first restrict the graph to states and transitions that occur on some
path from an initial state to a final state. If this accepting part of the graph
contains a cycle of positive total weight, then \(S_{\max}=+\infty\). Otherwise,
the finite value is computed in polynomial time by standard path-optimization
algorithms, such as a Bellman--Ford variant on the remaining weighted graph.

For the worst-case value, assume that \(\W\) is deterministic. Then every finance
word induces at most one accepting run, so the value \(\sem{\W}(w)\) is not affected
by maximization over competing runs. We use the identity
\[
\worst(\sem{\W})
=
-\sup_{w\in\dom(\sem{\W})}(-\sem{\W}(w)).
\]
It remains to observe that, for the linear finance semiring, transition
expressions can be effectively negated. By Lemma~6.23 in~\cite{DN26}, the
semantics of local payoff expressions can be reduced to finitely many affine
pieces. Negating each affine piece yields an effective representation of the
negative transition contribution. Hence we can construct a deterministic WFFA
\(-\W\) such that, for every \(w\in\dom(\sem{\W})\),
\[
\sem{-\W}(w)=-\sem{\W}(w).
\]

Applying the best-case procedure to \(-\W\) computes
\[
\sup_{w\in\dom(\sem{\W})}(-\sem{\W}(w)).
\]
Taking the negative of this value yields \(\worst(\sem{\W})\). Thus the worst-case
value is computable by the same polynomial-time graph analysis, after the
transition-level negation and extrema have been computed. \qed
\end{proof}

\section{Additional Case Study Details}

\subsection{Contract Temporal Structure}

The payoff component \(\mathcal P\) described in the main text specifies how cash
flows evolve in response to contract events, but it does not by itself fix the
temporal structure of the contract. In the case study, this structure is represented
by a separate deterministic automaton \(\mathcal T_n\) over the same event alphabet
\(\Sigma\). Its role is to constrain admissible event sequences to those consistent
with an \(n\)-period contract with observation dates, possible autocall redemption,
and terminal expiration.

The states \(\tau_0,\dots,\tau_n\) represent successive observation periods, where
\(\tau_i\) is reached after \(i\) observation events. The state \(A\) represents
early termination by autocall, and \(E\) represents regular expiration. The initial
state is \(\tau_0\), and the final states are \(A\) and \(E\). Transitions labeled
\(\texttt{f}\), \(\texttt{u}\), and \(\texttt{d}\) advance the contract to the next
observation period, while \(\texttt{a}\) and \(\texttt{e}\) model autocall
redemption and final expiration, respectively.

The automaton \(\mathcal T_n\) is shown in Fig.~\ref{fig:temporal_structure}.
For composition with the payoff component, we regard \(\mathcal T_n\) as a WFFA
whose initial, transition, and final weights are all the semiring unit \(\one\)
(\(0.0\) in \(\FF_{\lin}\)). The complete contract model is then obtained as the
Hadamard product of this structural component with \(\mathcal P\)
(cf.~\cite{DN26}, Lemma~5.4). Thus, \(\mathcal P\) determines the payoff
accumulation, whereas \(\mathcal T_n\) restricts the event sequences to those that
are temporally admissible.

\begin{figure}[ht]
\centering
\begin{tikzpicture}[
    >=stealth,
    every node/.style={font=\scriptsize},
    state/.style={circle, draw, minimum size=20pt, inner sep=1pt},
    scale=0.8, transform shape
]
\node (s0i) at (0,0) {};
\node[state] (s0) at (1.25,0) {$\tau_0$};
\node[state] (s1) at (3.25,0) {$\tau_1$};
\node[state] (s2) at (5.25,0) {$\tau_2$};
\node[state] (s3) at (7.25,0) {$\tau_3$};
\node[state] (snminusone) at (10.25,0) {$\tau_{n\text{-}1}$};
\node[state] (sn) at (12.25,0) {$\tau_{n}$};
\node[state] (e) at (14.25,0) {$E$};
\node[state] (a) at (8.25,-2) {$A$};
\node (dots) at (8.75,0) {$\dots$};
\node (ef) at (15.5,0) {};
\node (af) at (9.5,-2.0) {};

\draw[->]
    (s0i) edge[above] (s0)
    (s0) edge[above] node{$\texttt{f,u,d}$} (s1)
    (s1) edge[above] node{$\texttt{f,u,d}$} (s2)
    (s2) edge[above] node{$\texttt{f,u,d}$} (s3)
    (s3) edge[above] node{$\texttt{f,u,d}$} (dots)
    (dots) edge[above] node{$\texttt{f,u,d}$} (snminusone)
    (snminusone) edge[above] node{$\texttt{f,u,d}$} (sn)
    (sn) edge[above] node{$\texttt{e}$} (e)
    (s2) edge node[below left]{$\texttt{a}$} (a)
    (s3) edge node[right]{$\texttt{a}$} (a)
    (snminusone) edge node[right]{$\texttt{a}$} (a)
    (e) edge[above] (ef)
    (a) edge[above] (af)
    ;

\end{tikzpicture}
\caption{Deterministic automaton \(\mathcal T_n\) constraining the temporal
structure of an \(n\)-period contract.}
\label{fig:temporal_structure}
\end{figure}

\section{Prototype Architecture and Reproducibility}

\paragraph{Prototype Architecture.}

We implemented a Java research prototype supporting the end-to-end workflow used
in the case study. The prototype separates declarative model specification,
automata construction, scenario compilation, product construction, and extremal
payoff analysis.

Input models are provided as JSON task descriptions. A task specifies the payoff
WFFA, the temporal contract structure, the pattern-based scenario constraint
system, and the analysis to be performed. The workflow is as follows. The JSON
specification is deserialized into internal DTOs and translated into automata-level
objects. The payoff component \(\mathcal P\) is synchronized with the temporal
structure \(\mathcal T_n\). The scenario system \(\mathcal S\) is compiled into an
EHA \(\mathcal H_{\mathcal S}\), which is then synchronized with the contract
model. The resulting scenario-constrained WFFA is analyzed to compute exact
best-case and worst-case payoffs, witness event histories, automaton sizes, and
runtime statistics.

The implementation consists of four main layers: an automata layer for deterministic
automata, WFFAs, and product constructions; a scenario layer for pattern-based
constraints and EHAs; a JSON layer for declarative input specifications; and a
command-line layer for running analysis tasks. This modular organization allows
payoff models, temporal structures, and scenario assumptions to be varied
independently and recombined through the workflow shown in
Fig.~\ref{fig:scenario_constrained_workflow}.

\begin{figure}[t]
\centering
\begin{tikzpicture}[
    >=stealth,
    node distance=0.3cm and 1.0cm,
    box/.style={
        draw,
        align=center,
        text width=2cm,
        minimum width=2.0cm,
        minimum height=0.5cm,
        font=\fontsize{6}{7}\sffamily
    },
    arrow/.style={->}
]

\node[box] (P) {Payoff WFFA\\ \(\mathcal{P}\)};
\node[box, below=of P] (T) {Temporal structure \(\mathcal{T}_n\)};
\node[box, below=of T] (S) {Scenario constraint\\ system \(\mathcal{S}\)};

\node[box, right=of T, yshift=0.4cm] (PT) {Contract WFFA \(\mathcal{W}_{\mathcal{P},\mathcal{T}_n}\)};

\node[box, right=of S] (H) {EHA \(\mathcal{H}_{\mathcal{S}}\)};

\node[box, right=of PT, yshift=-0.8cm] (Final) {Scenario-constrained WFFA
\(\mathcal W\)};

\node[box, right=of Final] (Analysis) {Best-/worst-case analysis on WFFA \(\W\)};

\draw[arrow] (P.east) -- (PT.west);
\draw[arrow] (T.east) -- (PT.west);

\draw[arrow] (S.east) -- (H.west);

\draw[arrow] (PT.east) -- (Final.west);
\draw[arrow] (H.east) -- (Final.west);

\draw[arrow] (Final.east) -- (Analysis.west);

\end{tikzpicture}
\caption{Prototype workflow for scenario-constrained payoff analysis.}
\label{fig:scenario_constrained_workflow}
\end{figure}

\paragraph{Input format.}

Analysis tasks are specified as JSON documents. The example below shows a
pattern-based scenario analysis task. The field \texttt{wffaComposition} describes
the contract model as a composition of registered WFFA components. In the case
study, the payoff structure and the temporal contract structure are stored in a
model registry and referenced by identifiers. This follows the compositional view
of WFFAs used in~\cite{DN26}: complex contract models are assembled from smaller
automata components by product constructions.

The field \texttt{scenario} specifies the event alphabet, the resolver, and the
list of pattern-based interval constraints. In the example, the resolver is a
priority resolver with fallback interval \([99,101]\). The listed constraint
matches histories containing two consecutive upward movements and assigns the
admissible interval \([180,220]\) whenever this pattern is active.

\begin{lstlisting}[language=json,basicstyle=\ttfamily\scriptsize, caption={Example JSON task description.},label={lst:json_task}]
{
  "type": "pattern_scenario_analysis",
  "wffaComposition": {
    "type": "combine",
    "factors": [
      { "type": "ref", "referenceId": "autocallable_payoff_structure" },
      { "type": "ref", "referenceId": "autocallable_temporal_structure_t8" }
    ]
  },
  "scenario": {
    "alphabet": ["FLAT", "UP", "DOWN", "AUTOCALL", "EXPIRATION"],
    "resolver": {
      "type": "priority",
      "fallbackInterval": {
        "type": "between",
        "lowerBound": 99.0,
        "upperBound": 101.0
      }
    },
    "constraints": [
      {
        "pattern": {
          "type": "concatenation",
          "expressions": [
            { "type": "any_word"},
            { "type": "literal_sequence", "sequence": ["UP", "UP"] },
            { "type": "any_word" }
          ]
        },
        "interval": {
          "type": "between",
          "lowerBound": 180.0,
          "upperBound": 220.0
        }
      }
    ]
  }
}
\end{lstlisting}

\paragraph{Running the experiments.}
The implementation and experiment files are available in the accompanying
repository\footnote{\url{https://gitlab.com/vitnberg/wffa}}. 
The experiments are included in the repository under:
\begin{lstlisting}[language=bash,basicstyle=\ttfamily\scriptsize]
experiments/autocallable-structured-product
\end{lstlisting}
This directory contains the
JSON task files, shell scripts for executing the experiments, and generated result
files. The task files are grouped into payoff-analysis tests,
horizon-scalability tests, and constraint-scalability tests.

On Unix-like systems, all experiments can be executed from the experiment directory
as follows:
\begin{lstlisting}[language=bash,basicstyle=\ttfamily\scriptsize]
cd experiments/autocallable-structured-product
./scripts/run-all-tests.sh
\end{lstlisting}
If the scripts are not executable after checkout, run
\texttt{chmod +x scripts/*.sh} before executing them.

Individual experiment groups can be executed using the scripts in
\texttt{scripts/}. In particular,
\texttt{run-payoff-analysis-tests.sh} runs the payoff analysis reported in
Table~\ref{tab:extremal_analysis_results},
\texttt{run-horizon-scalability-tests.sh} runs the horizon-scalability experiment,
and \texttt{run-constraint-scalability-tests.sh} runs the constraint-scalability
experiment. The script \texttt{run-analysis-tasks.sh} executes the underlying JSON
analysis tasks directly.

The generated outputs are written to the \texttt{results/} directory. They include
the computed best-case and worst-case payoffs, witness event histories, product
automaton sizes, and runtime measurements used in the tables and figures of the
case study.

\paragraph{Output.}
Each experiment writes a textual report to the \texttt{results/} directory. The
report records the executed task file, the WFFA registry used for resolving model
references, the number of repeated runs, the computed extremal payoffs, witness
event histories, product-automaton metrics, and runtime statistics. A shortened
example is shown below.

\begin{lstlisting}[basicstyle=\ttfamily\scriptsize]
Task: experiments/autocallable-structured-product/tasks/payoff-analysis-tests/
      pattern-scenario-analysis-task-s1-t08.json
Registry: experiments/autocallable-structured-product/tasks/wffa/wffa-registry.json
Runs: 10

Pattern-based Scenario Analysis
===============================

Best case:
 - payoff: 126.0000
 - witness: FLAT FLAT UP DOWN FLAT UP UP AUTOCALL

Worst case:
 - payoff: 56.0000
 - witness: FLAT FLAT UP DOWN FLAT FLAT DOWN DOWN EXPIRATION

Metrics:
 - EHA states: 21
 - WFFA states: 77
 - product states: 1617
 - product transitions: 2226

Timing:
 - runs: 10
 - average: 87.546 ms
 - min: 53.219 ms
 - max: 264.188 ms
\end{lstlisting}

\end{document}